\documentclass[11pt]{article}
\usepackage[margin=1in]{geometry}
\usepackage{amsmath,amssymb,amsthm}
\usepackage{graphicx}
\usepackage{booktabs}
\usepackage{algorithm}
\usepackage[noend]{algpseudocode}
\usepackage{subcaption}
\usepackage{hyperref}
\usepackage{xspace}
\usepackage{float}

\newtheorem{theorem}{Theorem}
\newtheorem{lemma}[theorem]{Lemma}

\newcommand{\Dom}{\mathcal{D}}
\newcommand{\pass}{\textup{\textsc{Passes}}\xspace}
\newcommand{\ent}[1]{\textsf{#1}}
\newcommand{\nd}{\texttt{native\_decide}\xspace}
\algrenewcommand\algorithmicrequire{\textbf{Input:}}
\algrenewcommand\algorithmicensure{\textbf{Output:}}

\title{Two-Colour Zig-Zag Numberlink on Rectangles:\\
A Forbidden-Pattern Characterization, a Constructive\\
Polynomial-Time Algorithm, and a Lean Verification}
\author{Z. Zzyzek\thanks{To the extent possible under law, the author has waived all copyright and
related or neighboring rights to this paper and to the accompanying code, proofs and data, under
CC0 1.0 Universal (\url{https://creativecommons.org/publicdomain/zero/1.0/}).}}
\date{\today}
\begin{document}
\maketitle

\begin{abstract}
In Zig-Zag Numberlink, pairs of endpoints in a grid must be joined by vertex-disjoint paths that
together visit every cell. The problem is NP-complete when the number of pairs is unbounded
(Adcock et al.). For a single pair it is the Hamiltonian path problem in rectangular grid graphs,
solved by Itai, Papadimitriou and Szwarcfiter. Two pairs on a rectangle form the paired 2-disjoint
path cover problem. We settle the case of two pairs on rectangles with the endpoints anywhere in the grid.
If both sides of the rectangle are at least $11$, an instance is solvable if and only if no entry of
a finite catalogue of forbidden patterns occurs. The catalogue consists of a parity condition, an
alternation condition on the border, a handful of local corner and edge patterns, $65$ listed
corner configurations, and a rule that propagates forced routes out of the corners before testing
alternation. If a side is at most $10$, a frontier dynamic programming algorithm decides the
instance in time linear in the long side. The characterization is constructive. Solutions are
built by deleting empty lines, splitting the rectangle by straight cuts, and splicing the pieces
back together, in time $O((R+C)\,RC)$. Both directions of the characterization and the correctness
of the dynamic programming algorithm are verified in Lean~4. The finite case analyses, which are
window certificates for the corner configurations and a check of all irreducible instances with
sides $11$--$22$, are carried out by the Lean compiler via \nd. We describe the algorithm, the
catalogue, the proof structure, the compromises made in the formalization, and implementations in
Python, JavaScript and C that produce identical output.
\end{abstract}

\section*{AI disclosure}
Nearly the entirety of the work for this project was done by Claude (Opus 5.5), including writing
this paper, writing the reference implementations and writing the Lean 4 proof. I, the listed
author, contributed little more than the problem statement and guidance. I have listed myself as
an author only to satisfy any requirement that a human author be attached to this work.

\section{Introduction}

Numberlink is a pencil puzzle. The player is given pairs of numbered cells in a grid and connects
each pair by a path along edge-adjacent cells, so that no two paths share a cell. The
\emph{zig-zag} variant, popular as the mobile game \emph{Flow Free}, additionally requires every
cell of the grid to be used. Adcock et al.~\cite{adcock15} showed that Zig-Zag Numberlink is
NP-complete when the number of pairs is unbounded.

\begin{figure}[t]
\centering
\includegraphics[width=0.32\textwidth]{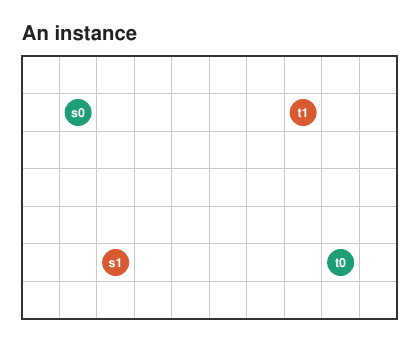}\hfill
\includegraphics[width=0.32\textwidth]{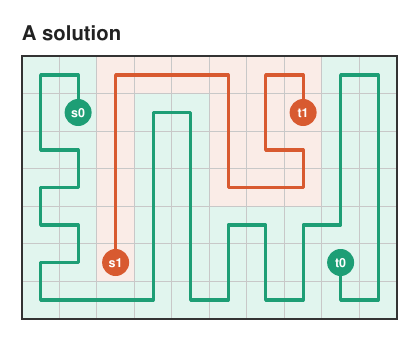}\hfill
\includegraphics[width=0.32\textwidth]{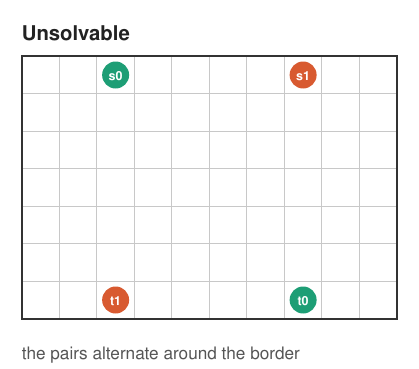}
\caption{Two-colour Zig-Zag Numberlink on a $7 \times 10$ rectangle. Left, an instance with
endpoints $s_0, t_0$ of colour~$0$ (green) and $s_1, t_1$ of colour~$1$ (orange). Middle, a
solution. The two vertex-disjoint paths $s_0 \to t_0$ and $s_1 \to t_1$ together visit every cell,
and the colour-$0$ path winds around the colour-$1$ path. Right, an unsolvable instance. The
endpoints alternate in colour around the border, so the two paths would have to cross (catalogue
entry \ent{T1}, Section~\ref{sec:catalogue}).}
\label{fig:splash}
\end{figure}

With a single pair, the problem asks for a Hamiltonian path between two given cells. Shade the
cells of the rectangle like a checkerboard. Itai, Papadimitriou and Szwarcfiter (IPS)~\cite{ips82}
showed that if the rectangle has an even number of cells, a Hamiltonian path between two cells
requires them to have different shades, and if it has an odd number of cells, both must have the
majority shade. A pair that meets this condition has a Hamiltonian path unless it belongs to one of
three small families of exceptions, which occur only when a side has length $1$, $2$ or~$3$.

We study the case where two pairs of endpoints lie on an $R \times C$ rectangle
(Figure~\ref{fig:splash}). The instance has endpoints $s_0, t_0$ of colour~$0$ and $s_1, t_1$ of
colour~$1$. A solution is a pair of vertex-disjoint paths $s_0 \to t_0$ and $s_1 \to t_1$ that
together cover every cell. In graph theory this is the \emph{paired 2-disjoint path cover}
problem. We allow the endpoints anywhere.

With two pairs, one path can block the other. In the right panel of Figure~\ref{fig:splash}, the
order of the endpoints around the border rules out a solution on a rectangle of any size. Near a
corner, a few endpoints can also force a path through the corner cells, and the forced route can
leave the remaining endpoints in alternating order (Section~\ref{sec:catalogue}). For rectangles
with both sides at least $11$, we show that four kinds of conditions account for every unsolvable
instance. These are the parity of the endpoints, their order around the border, a few patterns of
nearby endpoints, and a finite list of configurations near the corners.

\begin{theorem}[Main theorem]\label{thm:main}
Let $I$ be an instance on an $R \times C$ rectangle with $R, C \ge 11$. Then $I$ is solvable if and
only if \pass$(I)$, i.e.\ no entry of the catalogue of Section~\ref{sec:catalogue} fires. If
$\min(R, C) \le 10$, solvability is decided by a dynamic programming algorithm in time linear in
$\max(R, C)$. In both cases a solution, when one exists, is constructed in polynomial time.
\end{theorem}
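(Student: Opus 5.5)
The theorem has three parts, and I would prove them separately: soundness of the catalogue (every firing entry rules out a solution), completeness for $R,C\ge 11$ (\pass$(I)$ implies a solution exists), and the dynamic programming statement for thin rectangles.

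\textbf{Soundness.} Soundness is the easy direction. I would check each catalogue entry directly.
\begin{itemize}
\item For the parity condition, colour the grid as a checkerboard. A path alternates shades, so each path $s_i\to t_i$ has a shade surplus of $-1$, $0$ or $+1$ according to the shades of its endpoints. The two surpluses must add up to the surplus of the whole rectangle.
\item For border alternation, the argument is planar. A path between two border cells separates the rectangle, so if the colours alternate around the border the two paths would have to cross.
\item For the local corner and edge patterns, I would show that a degree-$2$ cell that is not an endpoint forces both of its incident edges. Chasing these forced edges either produces a contradiction or reduces the pattern to one of the earlier conditions.
\item The propagation rule is sound for the same reason. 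Forced routes out of a corner are contained in every solution, so the alternation test may be applied to the instance after those routes are contracted.
\item The $65$ corner configurations are confined to a bounded window. Each is certified by exhaustive search over that window together with its boundary interface, which is what the \nd window certificates do.
\end{itemize}

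\textbf{Completeness.} I would argue by strong induction on $RC$, following the reduction the abstract describes.
\begin{enumerate}
\item \emph{Empty lines.} If some row or column with both neighbours inside the grid contains no endpoint and is far enough from the corners, delete it. Solve the smaller instance, then reinsert the line by replacing one crossing edge with a detour. I would need to check that deletion preserves \pass. Deleting a line changes the side lengths by one, which flips the parity of the cell count. I therefore expect to delete lines in pairs, which preserves both the shade surplus and the border order.
\item \emph{Irreducible instances.} Four endpoints meet at most four rows and four columns, so after deletion both sides are bounded. This is where the computation over all irreducible instances with sides $11$--$22$ comes in, discharged by exhaustive check.
\item \emph{Constructive version.} For a polynomial-time algorithm rather than a table lookup, I would use straight cuts. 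Find a cut separating colour $0$ from colour $1$, or a cut that leaves one pair in each part. On each part apply the IPS Hamiltonian path theorem to get a path, and splice across the cut at a pair of adjacent cells.
\end{enumerate}
The bound $O((R+C)RC)$ comes from $O(R+C)$ deletions or cuts, each followed by $O(RC)$ reconstruction work.

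\textbf{Main obstacle.} The hardest step is showing that the catalogue is closed under the reductions. Concretely, if $I$ passes then the reduced instance passes, or the cut pieces satisfy the IPS conditions. Corner configurations are the delicate case, since deleting a line can move endpoints into a corner window. My first approach would be to delete only lines at distance more than the window size from every corner, so that corner windows are unchanged. The irreducible range would then grow to cover everything within distance $\approx 11$ of the corners, which is plausibly why the threshold $11$ and the range $11$--$22$ appear.

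\textbf{Thin rectangles.} For $\min(R,C)\le 10$, I would run a frontier (transfer-matrix) dynamic program along the long side. The state records, for each cell of the current cut line, its dangling edges, the non-crossing pairing of those edges, the colour labels, and which endpoints have already been used. The number of states is bounded by a constant depending only on the width, so each of the $\max(R,C)$ steps costs $O(1)$ and the total time is linear. Correctness is the standard argument that the frontier state determines which extensions are feasible, and back-pointers recover an explicit solution.
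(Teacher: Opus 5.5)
Your soundness argument, the thin-case DP and the skeleton of completeness agree with the paper: induction on area, deleting free lines in pairs, lifting by band extension, and a computer check on a bounded core. The reduction step fails as you state it, though. Deleting two free lines just because they are far from the corners does not preserve \pass. If the two lines form a whole internal gap, the endpoint lines on either side become adjacent and non-corner patterns can appear. On a $30\times30$ grid take $s_0=(13,14)$, $s_1=(13,15)$, $t_1=(16,14)$, $t_0=(16,15)$. No entry fires. After deleting the free rows $14$ and $15$, the endpoints fill a $2\times2$ square with each colour on a diagonal, so \ent{T2} fires and the induction hypothesis cannot be applied to the smaller instance. \ent{E1} can be created in the same way along an edge, away from the corners. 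The paper deletes two lines only from a run of at least three free lines: an edge gap of at least $4$, an internal gap of at least $10$, or an internal gap of at least $3$ with $8\le d\le n-10$. A free line therefore always survives between endpoint lines, and the bound $22$ on irreducible sides comes from these thresholds. Your lifting step also needs more than one detour. After two rows are inserted below the free row $r$, every path edge crossing between rows $r$ and $r+1$ must be stretched through the band, and the $m$ crossings cut it into $m+1$ blocks. Of these, $m$ are absorbed by U-detours at the crossings. The last needs a square flip on a horizontal path edge of row $r$ above it, and that edge exists by the counting argument of Lemma~\ref{lem:band}.

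The key idea missing is in the base case. The paper's computer check is not a direct solvability test of each core; a frontier DP of width up to $22$ is out of reach. It verifies that every passing irreducible instance has a valid move: a straight cut with chosen crossing pairs whose pieces are all OK. A piece is OK if it is a one-path piece acceptable for IPS, a thin two-path piece solvable by the plug DP, or a smaller passing two-path piece in $\Dom$, which is solvable by induction. The same moves give the construction. Your plan uses only cuts that separate the two pairs, with IPS on each side, and this cannot work in general. Suppose $R$ and $C$ are both even and each pair has both endpoints of one shade, which \ent{P} allows. Then every piece of such a cut has even area and IPS rejects it. Separately, the instance $s_0=(5,5)$, $t_0=(5,8)$, $s_1=(4,6)$, $t_1=(6,7)$ on a $12\times12$ grid passes and admits no reduction, yet no straight line separates its pairs. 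Such cores need moves in which paths cross the cut, so that some piece carries both colours (1/3, 2/2 cross, excursion, strip). That is why the induction must be set up to handle two-path pieces. For polynomial time alone you could solve the bounded core by brute force, but then existence rests entirely on a computation you have not specified.
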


Since the catalogue is a finite list of patterns, checking it takes time $O(R + C)$ when both sides
are at least $11$. When a side is at most $10$, the dynamic programming algorithm takes time
$O(\max(R, C))$, with a constant that depends only on the short side. Together they decide the
solvability of any instance in time $O(\max(R, C))$.

\paragraph{Contributions.}
\begin{itemize}
\item A finite catalogue of forbidden patterns that characterizes solvability for both sides
  $\ge 11$, with the endpoints anywhere in the grid (Section~\ref{sec:catalogue}).
\item A constructive algorithm. Line deletions are undone by a \emph{band extension} lemma that
  lifts solutions back, straight cuts (\emph{moves}) split the rectangle into pieces whose solutions
  glue together, and thin rectangles are solved by the plug DP (Sections~\ref{sec:algorithm}
  and~\ref{app:plugdp}).
\item A Lean~4 formalization of both directions and of the thin case (Section~\ref{sec:lean}),
  with the compromises made explicit.
\item Implementations in Python, JavaScript and C, tested against each other, the Lean definitions
  and exact solvers (Section~\ref{sec:impl}).
\end{itemize}

\section{Definitions}\label{sec:defs}

Cells are $(r, c)$ with $0 \le r < R$ and $0 \le c < C$. Two cells are adjacent if they differ by
one in exactly one coordinate. The instance is \emph{thin} if $\min(R, C) \le 10$, and the
\emph{domain} $\Dom$ is the set of instances with both sides $\ge 11$. Figure~\ref{fig:defs}
illustrates the definitions below.

\begin{figure}[t]
\centering
\begin{subfigure}[b]{0.17\textwidth}\centering
  \includegraphics[height=5.2cm]{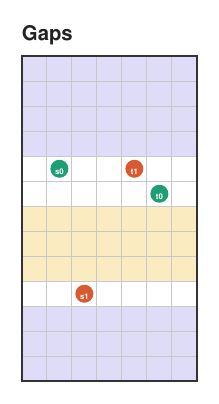}\caption{}\label{fig:defs:gaps}
\end{subfigure}\hfill
\begin{subfigure}[b]{0.32\textwidth}\centering
  \includegraphics[height=3.5cm]{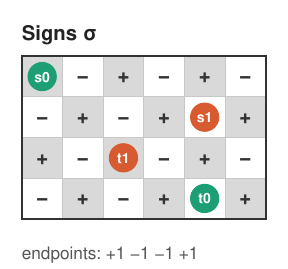}\caption{}\label{fig:defs:sigma}
\end{subfigure}\hfill
\begin{subfigure}[b]{0.22\textwidth}\centering
  \includegraphics[height=2.9cm]{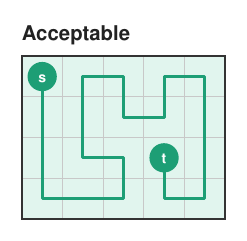}\caption{}\label{fig:defs:ipsyes}
\end{subfigure}\hfill
\begin{subfigure}[b]{0.22\textwidth}\centering
  \includegraphics[height=2.9cm]{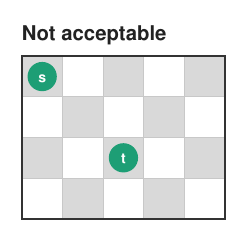}\caption{}\label{fig:defs:ipsno}
\end{subfigure}

\medskip
\begin{subfigure}[b]{0.4\textwidth}\centering
  \includegraphics[height=3.6cm]{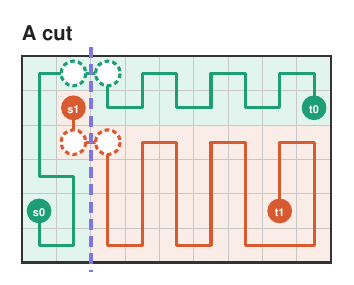}\caption{}\label{fig:defs:cut}
\end{subfigure}\hfill
\begin{subfigure}[b]{0.14\textwidth}\centering
  \includegraphics[height=3.6cm]{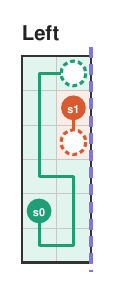}\caption{}\label{fig:defs:top}
\end{subfigure}\hfill
\begin{subfigure}[b]{0.36\textwidth}\centering
  \includegraphics[height=3.6cm]{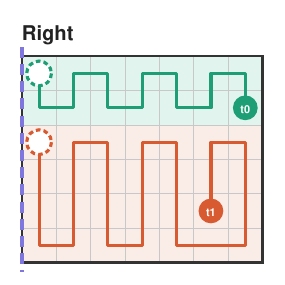}\caption{}\label{fig:defs:bottom}
\end{subfigure}
\caption{Definitions. (a) A $13 \times 7$ instance. Rows $0$--$3$ and $10$--$12$ are free and
form edge gaps (purple), and rows $6$--$8$ form an internal gap (yellow). (b) Black cells have sign
$\sigma = +1$ and white cells $-1$. The endpoint signs sum to $0$, which is $2 \cdot (RC \bmod 2)$
since $RC = 24$ is even, as parity requires. (c) A one-path piece. The cells $s$ and $t$ have
different shades on a $4 \times 5$ rectangle, so the pair is acceptable and a Hamiltonian $s$--$t$
path exists. (d) Both cells are black on a rectangle with an even number of cells, so the pair is
not acceptable. (e) A solution and a cut (dashed). Each colour crosses the cut once, and the two
crossing pairs are circled. (f), (g) The pieces left and right of the cut, with the cut drawn on
the side where they meet. Each crossing cell becomes a virtual endpoint (dashed) of its piece, and
the pieces' solutions glue back into (e).}
\label{fig:defs}
\end{figure}

\paragraph{Lines and gaps.}
A \emph{line} is a full row or column. It is \emph{free} if it contains no endpoint. A \emph{gap}
is a maximal run of consecutive free lines on one axis. An \emph{edge gap} touches the border, and
an \emph{internal gap} lies between two lines that contain endpoints (Figure~\ref{fig:defs:gaps}).
The reductions of Section~\ref{sec:algorithm} delete two lines from a gap.

\paragraph{Shades and signs.}
A cell is \emph{black} if $r + c$ is even and \emph{white} otherwise. Its sign $\sigma(x)$ is $+1$ if
$x$ is black and $-1$ if white (Figure~\ref{fig:defs:sigma}). We say shade rather than colour, to
keep \emph{colour} for the two pairs. Black and white cells alternate along any path, so a path
from $s$ to $t$ has sign sum $(\sigma(s) + \sigma(t))/2$. The whole grid has sign sum
$RC \bmod 2$. Hence a solution requires $\sum \sigma(\text{endpoint}) = 2 \cdot (RC \bmod 2)$, which
is the parity entry \ent{P} of the catalogue.

\paragraph{One-path pieces.}
For a rectangle and two cells $s \ne t$, IPS~\cite{ips82} show that a Hamiltonian $s$--$t$ path
exists if and only if two conditions hold. If $RC$ is even, $s$ and $t$ have different shades, and
if $RC$ is odd, both are black. In addition, $(s, t)$ is not in one of three explicit families that
occur when a side has length $1$, $2$ or $3$. We call such a pair \emph{acceptable}
(Figures~\ref{fig:defs:ipsyes} and~\ref{fig:defs:ipsno}). A Hamiltonian path between an acceptable
pair can be constructed in linear time.

\paragraph{Cuts and moves.}
A \emph{cut} is a straight line between two adjacent rows or columns. It splits the rectangle into
two \emph{pieces}. Where a solution path crosses the cut, the two cells on either side form a
\emph{crossing pair}, and each becomes a \emph{virtual endpoint} of its piece
(Figures~\ref{fig:defs:cut}--\ref{fig:defs:bottom}). The part of the path inside a piece then runs
between two endpoints, real or virtual. A piece is a \emph{one-path piece} if it holds the two
endpoints of a single path, and a \emph{two-path piece} otherwise. A \emph{move} is a cut together
with a choice of crossing pairs. It is \emph{valid} if every piece is OK, meaning that a one-path
piece is acceptable and a two-path piece is either thin and solvable or in $\Dom$ and passing the
catalogue. Solutions of the pieces then glue back along the crossing pairs.

\section{Related work}\label{sec:related}

\paragraph{Hamiltonian paths in grid graphs.}
Itai, Papadimitriou and Szwarcfiter~\cite{ips82} proved the Hamiltonian path problem NP-complete in
general grid graphs, which are finite induced subgraphs of the infinite grid and may have holes and
irregular boundaries. They also gave the characterization for rectangles that we use for one-path
pieces. This paper is about rectangles only. On a rectangle, long runs of free lines can be
shortened without changing solvability (Section~\ref{sec:algorithm}), and the remaining
obstructions come from parity, the order of the endpoints along the border, and endpoints near the
corners (Section~\ref{sec:catalogue}). General grid graphs have no such reduction, and for them the
problem is NP-complete already for one pair. For one pair on a rectangle the problem is exactly the
IPS case.

\paragraph{Numberlink.}
Adcock et al.~\cite{adcock15} proved that Zig-Zag Numberlink is NP-complete. For a single pair the
problem is the Hamiltonian path problem, solved on rectangles by IPS. Without the covering
requirement, the problem is a vertex-disjoint paths problem. That problem is NP-complete for
unbounded $k$ even in planar graphs~\cite{lynch75}, and polynomial for every fixed $k$ in general
graphs by the Graph Minors theory of Robertson and Seymour~\cite{rs95}. With the covering
requirement the paths must also fill the grid, so parity and corner effects become obstructions
that do not arise for disjoint paths alone.

\paragraph{Disjoint path covers.}
In graph theory, Zig-Zag Numberlink is the \emph{paired many-to-many $k$-disjoint path cover}
problem, which asks for $k$ vertex-disjoint paths that join given pairs of vertices and together
cover every vertex. It has been studied for many interconnection networks, such as hypercubes,
tori, and cylindrical and toroidal grids~\cite{parkihm15}. Park and Ihm~\cite{parkihm15} solved the
case $k = 2$ on bipartite cylindrical grids, which are rectangles with wraparound along the rows.
Above a small size, the only obstructions there are parity and three configurations of the
endpoints, and these coincide with our entries \ent{T1}, \ent{T2} and \ent{E1}. A cylinder has no
corners, and the rest of our catalogue concerns the corners of the rectangle.
Keshavarz-Kohjerdi~\cite{kk24} stated a characterization for the case where all four endpoints lie on
the border of the rectangle.

\paragraph{Frontier dynamic programming algorithms.}
The dynamic programming algorithm we use for thin rectangles is a transfer-matrix (frontier-based)
method over the connectivity states of a boundary. In competitive programming it is known as the
\emph{plug DP}. The standard reference is Chen's 2008 paper~\cite{chen08}, written in Chinese and
summarized in~\cite{oiwiki}. It introduced the bracket encoding of the frontier that we use, and it
handles the endpoints of a path as unpaired (``independent'') plugs. In statistical physics the same
technique enumerates Hamiltonian circuits, walks and chains on lattices of bounded
width~\cite{jacobsen07}. Our DP is this technique with each plug labelled by the colour of its path,
and we do not claim novelty for it. Our contribution regarding the DP is its formal verification in
Lean (Section~\ref{sec:lean}).

\section{Overview of the approach}\label{sec:overview}

Thin rectangles, those with $\min(R, C) \le 10$, are handled exactly by the plug DP
(Appendix~\ref{app:plugdp}). For rectangles with both sides at least $11$, every entry of the
catalogue other than parity looks only at the endpoints near the four corners, at the order of the endpoints along
the border, or at small groups of nearby endpoints. The proof has three parts.
\begin{enumerate}
\item \emph{Obstructions} (Theorem~B). Each entry of a catalogue of patterns is proved to prevent a
  solution. Most proofs are short and use parity, the planarity of the two paths, or forced degrees
  at corners. Some corner configurations are certified by exhaustive local computations.
\item \emph{Reductions} (Theorem~A, induction step). If an axis has a long run of free lines, two of
  them are deleted. The smaller instance passes the catalogue if the larger one does, because each
  entry sees the same corner windows and the same cyclic order of endpoints along the border. Any
  solution of the smaller instance extends to the larger one by \emph{band extension}.
\item \emph{Base} (Theorem~A, base case). A side of length $\ge 23$ always admits a reduction, so
  the irreducible instances have both sides between $11$ and $22$. For each passing irreducible
  instance there is a valid move, a straight cut whose pieces are smaller and OK. This finite
  statement is checked by computer.
\end{enumerate}
Theorems A and B together give the characterization, and the reductions and moves give the
construction.

\section{The catalogue of forbidden features}\label{sec:catalogue}

The catalogue is closed under the eight symmetries of the rectangle, under swapping the colours,
and under swapping $s$ with $t$ within a colour. We list entries up to these symmetries. Corner
entries are evaluated in the eight \emph{corner frames} (four corners, two orientations), with
frame coordinates $(i, j)$ counted from the corner. Below, $\gamma$ is either colour and
$\bar\gamma = 1 - \gamma$ is the other one. Entry names are set in sans-serif (\ent{R}, \ent{C}) to
distinguish them from the side lengths $R$ and $C$.

\begin{table}[ht]
\centering
\small
\begin{tabular}{@{}lp{0.72\textwidth}@{}}
\toprule
Entry & Fires when \\
\midrule
\ent{P} & $\sum \sigma(\text{endpoint}) \ne 2 \cdot (RC \bmod 2)$ (the parity of two covering paths) \\
\ent{T1} & all endpoints are on the border and their colours alternate $0, 1, 0, 1$ around it \\
\ent{T2} & the endpoints fill a $2\times2$ square with each colour on a diagonal \\
\ent{L1} & an endpoint's in-grid neighbours are all endpoints of the other colour \\
\ent{L6} & two corners are closed off, one by each colour, each being an empty corner cell whose two
    neighbours are endpoints of one colour \\
\ent{L2}--\ent{L5} & small corner patterns, for example \ent{L2} with $(0,1)$ of colour $\gamma$,
    $(1,0)$ of colour $\bar\gamma$, and $(0,0)$ empty \\
\ent{E1} & along an edge, $(0,\ell)$ and $(1,\ell{+}1)$ have colour $\gamma$ and $(1,\ell{+}2)$ and
    $(0,\ell{+}3)$ have colour $\bar\gamma$ \\
\ent{B} & a listed corner triple occurs ($21$ triples, or $5$ when both sides are odd), with the
    fourth endpoint on the border outside the $6\times6$ corner window \\
\ent{C}, \ent{D} & the four endpoints form one of $36$ listed corner configurations ($RC$ even) or
    $3$ ($RC$ odd) within an $8\times8$ corner window \\
\ent{R} & after forced corner routes (rules R1--R3), the effective endpoints alternate along the
    border of the remaining region (Figure~\ref{fig:R}) \\
\bottomrule
\end{tabular}
\caption{The catalogue. The lists for \ent{B}, \ent{C} and \ent{D} are in Appendix~\ref{app:lists}.}
\label{tab:catalogue}
\end{table}

\begin{figure}[ht]
\centering
\includegraphics[height=3.3cm]{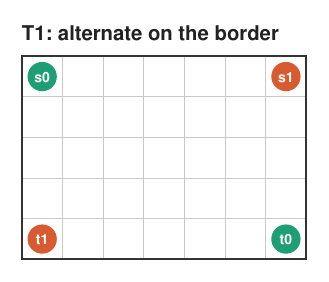}\hfill
\includegraphics[height=3.3cm]{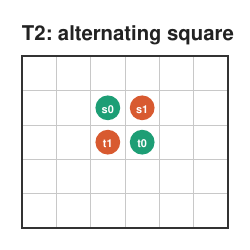}\hfill
\includegraphics[height=3.0cm]{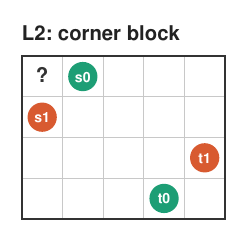}

\medskip
\includegraphics[height=3.0cm]{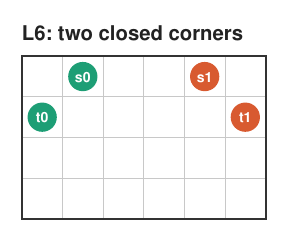}\qquad
\includegraphics[height=4.6cm]{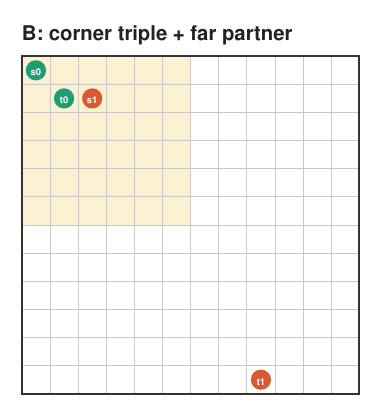}
\caption{Entries \ent{T1}, \ent{T2}, \ent{L2}, \ent{L6} and \ent{B}. In \ent{T2} the endpoints fill
a $2\times2$ square with each colour on a diagonal, so the two paths would have to cross inside the
square. In \ent{L2} the corner cell (marked ?) can only be reached from two endpoints of different
colours. In \ent{B} the $6\times6$ corner window is shaded.}
\label{fig:catalogue}
\end{figure}

\begin{figure}[t]
\centering
\begin{subfigure}[b]{0.15\textwidth}\centering
  \includegraphics[height=2.6cm]{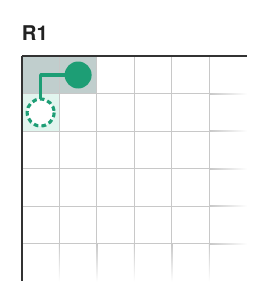}\caption{}
\end{subfigure}\hfill
\begin{subfigure}[b]{0.15\textwidth}\centering
  \includegraphics[height=2.6cm]{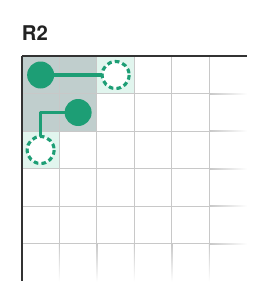}\caption{}
\end{subfigure}\hfill
\begin{subfigure}[b]{0.15\textwidth}\centering
  \includegraphics[height=2.6cm]{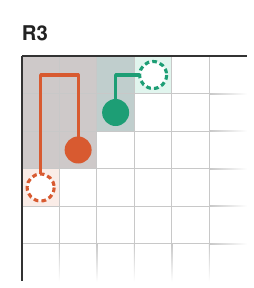}\caption{}
\end{subfigure}\hfill
\begin{subfigure}[b]{0.25\textwidth}\centering
  \includegraphics[width=\linewidth]{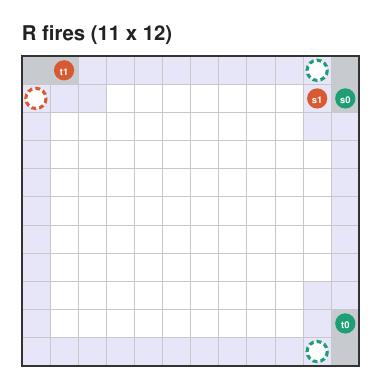}\caption{}
\end{subfigure}\hfill
\begin{subfigure}[b]{0.25\textwidth}\centering
  \includegraphics[width=\linewidth]{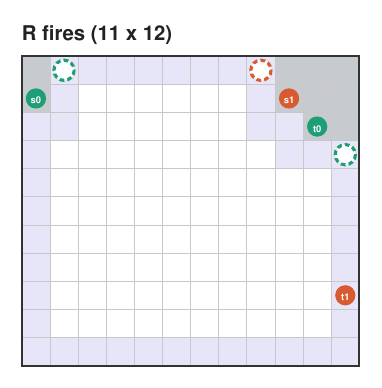}\caption{}
\end{subfigure}
\caption{The \ent{R} entry. (a)--(c) Rules R1--R3 at the top-left corner of the grid, which
continues to the right and below. Green is a colour $\gamma$ and orange the other colour
$\bar\gamma$. The forced path pieces are drawn, settled cells are grey, and dashed circles are the
effective endpoints. (d), (e) Two $11 \times 12$ instances on which \ent{R} fires and no other
entry does. Neither has a solution. After the forced routes, the four effective endpoints lie on
the border of the remaining region (shaded purple) and alternate in colour along it. This is the
\ent{T1} condition for that region, so the two paths would have to cross. In (d) rule R1 applies at
three corners, and in (e) rule R3 settles eight cells at the top-right corner and R1 settles two at
the top-left.}
\label{fig:R}
\end{figure}

\paragraph{The \ent{R} entry.}
Some corner configurations force a path through the corner cells (Figure~\ref{fig:R}). The rules
below are stated in a corner frame.
\begin{itemize}
\item[R1] An endpoint at $(0,1)$, with $(0,0)$ and $(1,0)$ not endpoints. The corner cell $(0,0)$
  has only two neighbours, so the path runs $(0,1) \to (0,0) \to (1,0)$. The endpoint effectively
  moves to $(1,0)$, and $(0,0)$ and $(0,1)$ are settled.
\item[R2] Endpoints of one colour at $(0,0)$ and $(1,1)$, with $(0,1)$, $(1,0)$, $(0,2)$ and
  $(2,0)$ not endpoints. The paths are forced around the $2\times2$ corner square, and the
  endpoints effectively move to $(0,2)$ and $(2,0)$.
\item[R3] Endpoints of different colours at $(1,2)$ and $(2,1)$, with $(0,0)$--$(0,3)$, $(1,0)$,
  $(1,1)$, $(2,0)$ and $(3,0)$ not endpoints. The two paths fill the corner, the endpoints
  effectively move to $(0,3)$ and $(3,0)$, and eight cells are settled.
\end{itemize}
The rules are applied once per corner frame in a fixed order (at each frame R3, then R2, then R1),
skipping a rewrite that would overlap cells already settled. \ent{R} fires if at least one rewrite
applied and the four effective endpoints are distinct cells on the boundary of the remaining region
whose colours alternate along it. This is \ent{T1} for the remaining region.

\paragraph{Where the entries come from.}
\ent{P} and \ent{T1} are the classical obstructions. Each path alternates black and white cells,
which gives parity, and two disjoint paths between alternating border points would cross, which
gives planarity. \ent{T2}, \ent{L1}--\ent{L6} and \ent{E1} express forced degrees near endpoints
and corners. \ent{B}, \ent{C} and \ent{D} are corner configurations that are unsolvable for reasons
that are local but not short. Each was found by exhaustive search on corner windows, and each is
certified by a window computation (Section~\ref{sec:proof}).

\section{The algorithm}\label{sec:algorithm}

\subsection{Reductions and band extension}

On an axis of length $n \ge 13$, the following reductions delete two free lines $d$ and $d+1$.
\begin{itemize}
\item[(S)] A \emph{strip} applies to an edge gap of at least $4$ lines and deletes the outermost two.
\item[(K)] A \emph{compression} applies to an internal gap of at least $10$ lines and deletes two
  from its middle.
\item[(I)] An \emph{interior compression} applies to an internal gap of at least $3$ lines
  containing $d$ and $d+1$ with $8 \le d \le n - 10$.
\end{itemize}

\begin{lemma}[Band extension]\label{lem:band}
If $I'$ is solvable and row $r$ of $I'$ is free, then inserting two empty rows directly below $r$
gives a solvable instance $I$. A solution of $I$ is obtained from one of $I'$ in time $O(RC)$.
\end{lemma}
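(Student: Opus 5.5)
Since row $r$ of $I'$ is free, every cell of row $r$ lies in the interior of one of the two solution paths. Each cell therefore has exactly two solution edges, and none of them is an endpoint. I would replace row $r$ by the three rows $r, r+1, r+2$ of $I$, where rows $r+1$ and $r+2$ are the inserted empty rows. Rows below $r$ are shifted down by two, so every endpoint keeps its relative position. Parity is preserved automatically, since the inserted $2\times C$ band has sign sum zero. The task is then to reroute the paths locally through the band.

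The key observation concerns the vertical edges between row $r$ and row $r+1$ of $I'$. Call a column $c$ \emph{crossing} if the solution uses the vertical edge from $(r,c)$ down to $(r+1,c)$. In $I$, I would copy every edge of the old solution. Edges above row $r$ and inside row $r$ stay where they are. Edges below row $r$ are shifted down by two. Each crossing vertical edge at column $c$ is replaced by the straight vertical segment $(r,c)\to(r+1,c)\to(r+2,c)\to(r+3,c)$.

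This covers all band cells in crossing columns. The band cells in non-crossing columns are still uncovered, and they must be absorbed by detours. Because row $r$ is free and every one of its cells has degree $2$, the horizontal edges in row $r$ form maximal runs, and each run is a subpath of a single colour. I would use the standard $U$-detour. Take a horizontal solution edge $(r,c)\,(r,c+1)$ with both $c$ and $c+1$ non-crossing, delete it, and replace it by
\[
(r,c)\to(r+1,c)\to(r+2,c)\to(r+2,c+1)\to(r+1,c+1)\to(r,c+1).
\]
This covers the four band cells of columns $c$ and $c+1$ without changing the colour or the endpoints of the path.

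It remains to show that the non-crossing columns can be paired into adjacent pairs joined by horizontal solution edges in row $r$. Each cell $(r,c)$ has degree $2$. If column $c$ is crossing, the cell uses one vertical-down edge. The cell's other edges go up, left or right.

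Consider a maximal block of consecutive non-crossing columns. I want to argue that its length is even and that it can be tiled by horizontal solution edges. This may be false as stated, because a non-crossing cell may have both of its edges going up, or one up and one sideways. So I would argue with the row below instead, using the edges between row $r+1$ and row $r+2$ of $I'$, or alternatively choose to insert the band either just above or just below row $r$.

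A cleaner route is the following. The cells of row $r$ split into horizontal runs of the solution. A run of length $k$ has two end cells, and each end cell leaves the run vertically, either up or down. A down exit is a crossing column. Every interior cell of a run has no vertical edge. Along a run, I would pair cells as follows:
\begin{itemize}
\item if $k$ is even, detour consecutive pairs of the run;
\item if $k$ is odd and some end exits down, stretch that end's vertical edge through the band, and detour the remaining $k-1$ cells in pairs;
\item if $k$ is odd and both ends exit up, use a snake through the band.
\end{itemize}

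The last case is the main obstacle. For it I would replace the run by a boustrophedon path through the $3\times k$ block of rows $r$ to $r+2$, starting at one end and finishing at the other. This does not work when $k$ is odd, since the start and end are then at the same height on different sides. Instead I would note that the sign sum of the run must be compatible, and treat a run of length $1$, whose two edges both go up, as a special case. Such a cell still has a horizontal neighbour on some run. I expect to resolve it by merging the cell's band cells into a neighbouring run's detour, using a $2\times3$ detour in place of a $2\times2$ one. Note that $C\ge 2$.

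Each step is a constant-time local edit per column, so the total time is $O(RC)$, dominated by copying the solution.
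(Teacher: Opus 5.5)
There is a genuine gap. Your only way to absorb band cells is the detour off a horizontal solution edge $(r,c)$--$(r,c+1)$ of row $r$. Once the crossing edges are stretched straight, this device cannot handle every configuration.

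Suppose a non-crossing column $c$ lies between two crossing columns $c-1$ and $c+1$. Let $(r,c)$ be joined upward and to $(r,c+1)$, and let $(r,c+1)$ be joined to $(r,c)$ and downward. This already happens in the $3\times4$ Hamiltonian path $(0,0),(1,0),(2,0),(2,1),(2,2),(1,2),(1,1),(0,1),(0,2),(0,3),(1,3),(2,3)$ with $r=c=1$. The band cells $(r+1,c)$ and $(r+2,c)$ are then walled in by the stretched segments of columns $c\pm1$. No detour off a row-$r$ edge, whether $2\times2$ or $2\times3$, can reach them. Your rule ``$k$ even: detour consecutive pairs'' would route through the band cells of column $c+1$, which the stretched segment already occupies. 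The case you leave open ($k$ odd, both ends up) comes from the same missing idea. So does the ``run of length $1$ whose two edges both go up'', which cannot occur because a cell has only one upward neighbour.

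The missing idea is that the stretched crossing segments can absorb band cells themselves. Replace the middle edge $(r+1,c)$--$(r+2,c)$ of a stretched segment by a U-shaped detour. It runs along row $r+1$ through the whole block of consecutive non-crossing columns on one side and returns along row $r+2$. This absorbs a block of any width, including width $1$ and odd widths, so no pairing of columns is needed.

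With $m$ crossings there are $m+1$ blocks $B_0,\dots,B_m$. Let crossings $1,\dots,j$ absorb the block to their left and the others the block to their right. This absorbs every block except a freely chosen $B_j$.
\begin{itemize}
\item If some block is empty, take it as $B_j$.
\item Otherwise you need a block containing a solution edge $(r,a)$--$(r,a+1)$. Replace that edge by the boundary cycle of the $2\times|B_j|$ band block, which is the paper's square flip.
\end{itemize}
Such a block exists by a counting argument, which your proposal also lacks. Every non-crossing cell of row $r$ has at least one horizontal edge, and every crossing cell has at most one. A block without an internal horizontal edge must therefore receive one from a crossing cell. The $m$ crossing cells cannot serve all $m+1$ nonempty blocks. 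This is the paper's argument, and the construction is clearly $O(RC)$.
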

\begin{proof}[Construction]
Stretch every path edge crossing below row $r$ through the two new rows. The new rows split into
blocks between consecutive crossings, and each crossing absorbs the block on one of its sides by a
U-shaped detour. The remaining block is absorbed by a \emph{square flip}, in which a path edge
$(r, a)$--$(r, a{+}1)$ inside it is replaced by the block's boundary cycle. Such an edge exists by
counting. The $m$ crossing cells of row $r$ supply at most $m$ horizontal edges, while each of the
$m+1$ runs of non-crossing cells needs one, because its cells have degree two and no vertical edge
downward.
\end{proof}

\begin{figure}[ht]
\centering
\includegraphics[height=3.2cm]{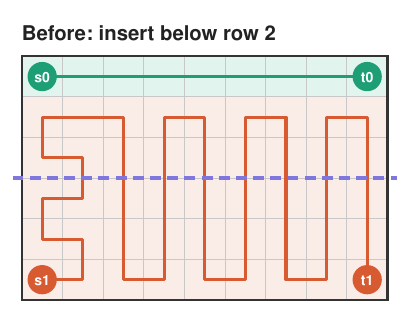}\qquad
\includegraphics[height=3.8cm]{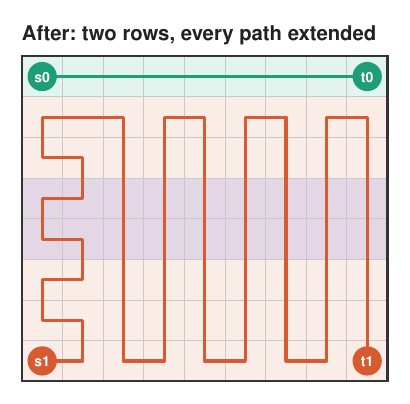}
\caption{Band extension. Two rows are inserted below row 2.}
\label{fig:band}
\end{figure}

\subsection{Moves}

\begin{figure}[ht]
\centering
\includegraphics[height=4.3cm]{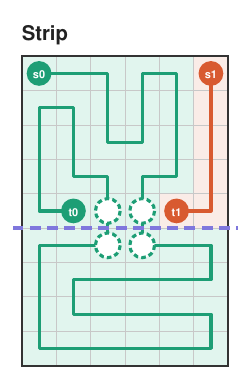}\hfill
\includegraphics[height=4.3cm]{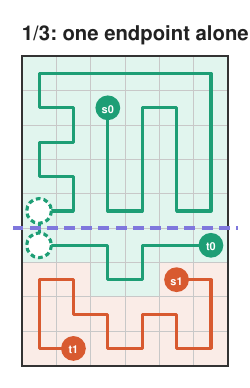}\hfill
\includegraphics[height=4.3cm]{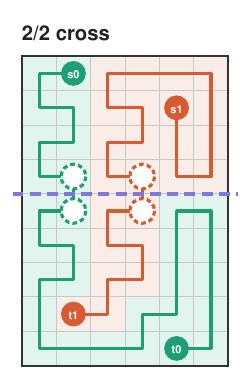}\hfill
\includegraphics[height=4.3cm]{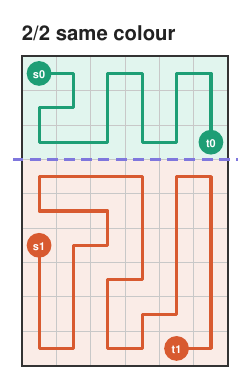}\hfill
\includegraphics[height=4.3cm]{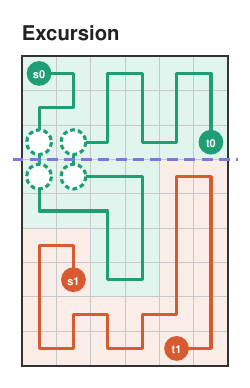}
\caption{The five move types. Each cuts the rectangle along the dashed line, and the dashed circles
mark crossing pairs, whose cells become virtual endpoints of the pieces. In a \emph{strip} all four
endpoints lie on one side, and the other side is an empty block of even area, spliced into a path
that runs along the cut. In a \emph{1/3} move one endpoint is alone on its side. Its path crosses
the cut once, so that side is a one-path piece from the endpoint to its crossing cell, and the
other side is a two-path piece. In a \emph{2/2 cross} each side holds one endpoint of each colour.
Each path crosses once, and both sides are two-path pieces. In a \emph{2/2 same} move each side
holds both endpoints of one colour. No path crosses, and both sides are one-path pieces. In an
\emph{excursion} each side holds one whole pair as in 2/2 same, but one path (here colour~$0$)
leaves its side and returns, crossing twice.}
\label{fig:moves}
\end{figure}

A move cuts the rectangle between rows $p-1$ and $p$, up to symmetry. The five types differ in how
the endpoints fall (Figure~\ref{fig:moves}). In a \emph{strip} all endpoints are on one side, and
the other side is empty, has even area and at least two lines. Its block is spliced into a path
running along the cut. In a \emph{1/3} move one endpoint is alone, and its piece is a one-path
piece to its crossing. A \emph{2/2 cross} has one endpoint of each colour on each side and two
crossings. In a \emph{2/2 same} move each side holds one whole pair, giving two one-path pieces. An
\emph{excursion} also has one whole pair on each side, and one colour crosses the cut twice. The
pieces' solutions glue along the crossing edges.

\subsection{Pseudocode}

Algorithm~\ref{alg:decide} decides an instance, Algorithm~\ref{alg:solve} constructs a solution, and
Algorithm~\ref{alg:band} is the band extension used to undo a reduction. The catalogue test
\textsc{Fires} is written out in Appendix~\ref{app:fires} (Algorithm~\ref{alg:fires}). Each entry's
test is a constant-size pattern check (Table~\ref{tab:catalogue} and Appendix~\ref{app:lists}),
except for \ent{T1}, \ent{E1} and \ent{R}, which scan the border. The test takes time $O(R + C)$.

\begin{algorithm}[tb]
\caption{Decision}\label{alg:decide}
\begin{algorithmic}[1]
\Require instance $I = (R, C, s_0, t_0, s_1, t_1)$ with four distinct endpoints
\Ensure whether $I$ is solvable
\If{$\min(R, C) \le 10$} \Return \Call{PlugDPFeasible}{$I$} \Comment{Appendix~\ref{app:plugdp}}
\Else\ \Return \textbf{not} \Call{Fires}{$I$} \Comment{Section~\ref{sec:catalogue}}
\EndIf
\end{algorithmic}
\end{algorithm}

\begin{algorithm}[tb]
\caption{Construction}\label{alg:solve}
\begin{algorithmic}[1]
\Function{Solve}{$I$}
  \If{$\min(R, C) \le 10$} \Return \Call{PlugDPSolve}{$I$} \EndIf
  \If{\Call{Fires}{$I$}} \Return \textsc{None} \EndIf
  \If{$(\mathit{axis}, d) \gets$ \Call{FindReduction}{$I$} succeeds}
    \State $I' \gets I$ with lines $d, d+1$ on $\mathit{axis}$ deleted
    \State $S' \gets$ \Call{Solve}{$I'$}
    \State \Return \Call{BandExtend}{$S'$, $\mathit{axis}$, $d$} \Comment{Lemma~\ref{lem:band}}
  \EndIf
  \If{a cut separates the two pairs and both one-path pieces are acceptable}
    \State \Return the two Hamiltonian paths (IPS construction)
  \EndIf
  \State $M \gets$ \Call{FindMove}{$I$} \Comment{exists by Theorem~A, and here $R, C \le 22$}
  \State solve each piece of $M$ (by IPS, the plug DP, or \Call{Solve}{})
  \State \Return the pieces' solutions glued along $M$'s crossings
\EndFunction
\Statex
\Function{FindReduction}{$I$}
  \For{$\mathit{axis} \in \{\text{rows}, \text{columns}\}$ with length $n \ge 13$}
    \If{lines $0..3$ are free} \Return $(\mathit{axis}, 0)$ \EndIf
    \If{lines $n{-}4..n{-}1$ are free} \Return $(\mathit{axis}, n-2)$ \EndIf
    \If{lines $\ell..\ell{+}9$ are free, with endpoints on both sides} \Return $(\mathit{axis}, \ell+4)$ \EndIf
    \If{$8 \le d \le n{-}10$, lines $d, d{+}1$ and $d{-}1$ or $d{+}2$ free, endpoints on both sides}
      \State \Return $(\mathit{axis}, d)$
    \EndIf
  \EndFor
  \State \Return \textsc{None}
\EndFunction
\Statex
\Function{FindMove}{$I$}
  \For{$m = 0, 1, \dots, 10$} \Comment{width of the widest thin piece, $0$ for none}
    \For{each move type, each of the 8 orientations, each endpoint relabelling}
      \For{each cut position $p$ and crossing choice whose widest thin piece has width $m$}
        \If{every piece is OK (narrower piece checked first)} \Return the move \EndIf
      \EndFor
    \EndFor
  \EndFor
\EndFunction
\end{algorithmic}
\end{algorithm}

\begin{algorithm}[tb]
\caption{Band extension (Lemma~\ref{lem:band}), rows}\label{alg:band}
\begin{algorithmic}[1]
\Require a solution $P$ of $I'$ and a free row $r$
\Ensure a solution of $I'$ with two empty rows $n_1 = r{+}1$, $n_2 = r{+}2$ inserted
\State shift every cell below row $r$ down by two
\State $X \gets$ columns $c$ where $P$ uses the edge $(r, c)$--$(r{+}3, c)$ \Comment{crossings}
\For{$c \in X$} replace the edge by $(r,c), (n_1,c), (n_2,c), (r{+}3,c)$ \EndFor
\State let $B_0, \dots, B_{|X|}$ be the column blocks of rows $n_1, n_2$ between crossings
\State $j \gets$ an index with $B_j = \emptyset$, or else one whose block contains a path edge
       $(r,a)$--$(r,a{+}1)$
\For{the $i$-th crossing $c$} absorb $B_{i-1}$ if $i \le j$, else $B_i$, by a U-detour replacing
     $(n_1,c)$--$(n_2,c)$ \EndFor
\If{$B_j \ne \emptyset$} replace $(r,a)$--$(r,a{+}1)$ by the boundary cycle of $B_j$ opened at
     $(n_1,a)$--$(n_1,a{+}1)$ \EndIf
\end{algorithmic}
\end{algorithm}

\subsection{Complexity}
Each reduction removes two lines, so at most $(R + C)/2$ reductions occur. Each costs a catalogue
check ($O(R + C)$) and a band extension ($O(RC)$). Moves occur only on rectangles with both sides
at most $22$, in constant time. The total is $O((R + C)\,RC)$ for $R, C \ge 11$, plus the plug DP,
which is linear in the long side for a fixed short side.

\subsection{A worked example}

\begin{figure}[ht]
\centering
\includegraphics[height=4.4cm]{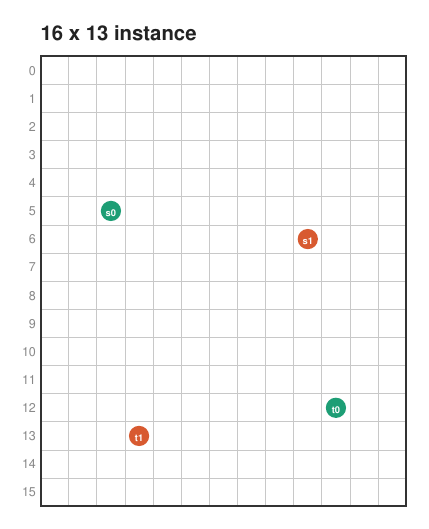}\hfill
\includegraphics[height=4.0cm]{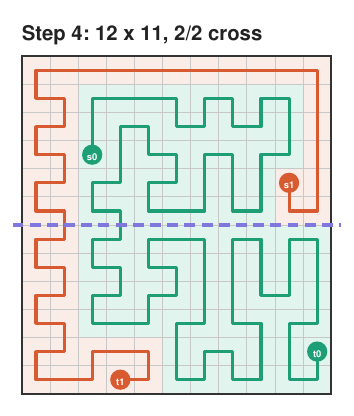}\hfill
\includegraphics[height=4.4cm]{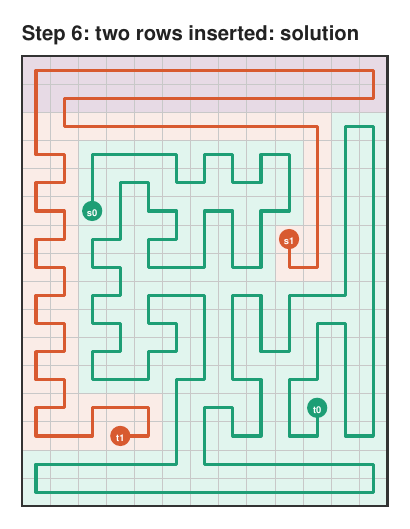}
\caption{A $16 \times 13$ instance, the $12\times11$ core with its 2/2-cross move, and the
solution.}
\label{fig:example}
\end{figure}

Consider the $16\times13$ instance of Figure~\ref{fig:example}. Rows $0$--$3$ are free, so the strip
reduction deletes rows $0$ and $1$. The $14\times13$ result has no reduction, and strip moves cut
off a $2\times13$ block below and a $12\times2$ block on the right. The $12\times11$ core is split by
a horizontal 2/2-cross move into two thin $6\times11$ pieces, which the plug DP solves. The blocks
are spliced back, and the two deleted rows are reinserted by band extension.

\section{Proof outline}\label{sec:proof}

\begin{theorem}[Theorem B, soundness of the catalogue]
If an entry fires on an instance in $\Dom$, the instance is unsolvable.
\end{theorem}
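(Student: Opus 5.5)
We argue entry by entry, always by contradiction: fix a hypothetical solution, i.e.\ two vertex-disjoint paths $P_0\colon s_0\to t_0$ and $P_1\colon s_1\to t_1$ covering the grid, and derive a contradiction from the pattern. The entries fall into three groups, handled by three different tools: a counting argument (\ent{P}), a planarity argument (\ent{T1}, \ent{T2}, \ent{R}), and forced-degree or window arguments (\ent{L1}--\ent{L6}, \ent{E1}, \ent{B}, \ent{C}, \ent{D}).

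\ent{P} is immediate from the sign computation in Section~\ref{sec:defs}. Each $P_\gamma$ has sign sum $(\sigma(s_\gamma)+\sigma(t_\gamma))/2$. The two paths partition the cells, so these sums add to $RC \bmod 2$. Hence $\sum\sigma(\text{endpoint}) = 2(RC\bmod 2)$.

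For \ent{T1} we embed the grid in the plane and add an outer vertex joined to every border cell. $P_0$ together with the outer vertex then forms a cycle, namely $P_0$ closed through the outer face. By the Jordan curve theorem this cycle separates $s_1$ from $t_1$, because the endpoints alternate on the border. So $P_1$ would have to meet $P_0$, which is impossible. For \ent{T2}, the two diagonals of a $2\times2$ square can be treated the same way by a local crossing argument. Each endpoint in the square has all of its same-colour continuation outside the square, and we close off the square with a small auxiliary face to reduce to the \ent{T1} argument. Alternatively, we note that the four cells bound a face of the grid, and two disjoint paths joining opposite corners of a face in a plane graph must cross.

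\ent{R} needs a lemma: in every solution, the forced routes of R1--R3 are actually used. For R1, the corner cell has degree $2$ and is not an endpoint, so both of its edges are used, and the path through $(0,1)$ continues to $(1,0)$. R2 and R3 are short case checks on the $3\times3$ or $4\times4$ corner. Next we must check that the rewrites compose. This is why overlapping rewrites are skipped: each rewrite only removes settled cells, which are covered by fixed path segments, and replaces endpoints by effective endpoints. The restriction of the solution to the remaining region is therefore a solution of the reduced instance on that region. That region is a rectangle minus corner notches, so it is still a disc, and the \ent{T1} Jordan argument applies verbatim to its boundary.

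The local entries \ent{L1}--\ent{L6} and \ent{E1} are forced-degree arguments. In \ent{L1}, an endpoint has degree $1$ in its path, but its only neighbours are of the other colour. \ent{L2}: the empty corner cell needs two path edges, both incident to endpoints of different colours, which merges colours. \ent{L6} combines this with parity: each closed-off corner cell is absorbed into one colour's endpoint, which consumes that endpoint's single edge, so that endpoint is adjacent to its own partner through the corner. I expect \ent{L3}--\ent{L5} and \ent{E1} to follow similarly, by short degree chases or by planarity around the forced corner cell.

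The main obstacle is \ent{B}, \ent{C} and \ent{D}, whose unsolvability is not local in a short sense. The plan is a window certificate. Restrict any solution to the $6\times6$ (respectively $8\times8$) corner window $W$. The result is a family of path fragments that covers $W$ and exits through the window's inner boundary. We enumerate all boundary ``interface'' states: which boundary cells are used as exits, their colours, and the pairing of exits. We then show that no interface state consistent with the configuration can be completed outside $W$.

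The completion side needs care, because outside $W$ the rectangle is large (both sides $\ge 11$). We would abstract the exterior as ``any planar pairing with the correct colour order and parity is realizable or not''. Since only unsolvability is claimed, we need only a necessary condition on the exterior: planarity of the exit pairing on the boundary of the region outside $W$, plus the sign-count parity on that region. Then the finite computation checks that, for every interior filling of $W$, the induced exit data violates planarity or parity. In \ent{B}, the fourth endpoint lies on the border, which contributes to the planarity constraint. This finite check is the step to be done by \nd in Lean.
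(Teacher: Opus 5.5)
Your proposal follows the paper's route: the sign count for \ent{P}, the disc-crossing argument for \ent{T1}, \ent{T2} and (after the forced corner routes) \ent{R}, forced degrees for the local entries, and window certificates for \ent{B}, \ent{C}, \ent{D}, whose outside test uses only necessary conditions (parity, a non-crossing matching of the exits, path structure). One correction: \ent{L6} is finished by coverage, not parity, and so is \ent{E1} via its two border cells between the pairs; the forced routes make each path exactly three cells, leaving the other $RC-6$ cells uncovered, and parity can hold in \ent{L6} instances.
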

\ent{P}, \ent{T1}, \ent{T2}, \ent{L1}--\ent{L6}, \ent{E1} and \ent{R} have direct proofs, using
parity, the fact that two disjoint paths between alternating border points of a disc must cross,
and forced degrees. \ent{B}, \ent{C} and \ent{D} are certified by \emph{window certificates}. A
dynamic programming algorithm over the corner window enumerates all ways a solution can look inside
it, and an outside test shows that none of them can be completed outside the window, for any
rectangle in $\Dom$. The outside test uses parity, a non-crossing perfect matching of the boundary
crossings, and the structure of the paths.

\begin{theorem}[Theorem A, completeness of the catalogue]
Every instance in $\Dom$ that passes the catalogue is solvable.
\end{theorem}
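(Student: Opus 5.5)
We argue by strong induction on $R+C$ over instances in $\Dom$ that pass the catalogue, following the outline in Section~\ref{sec:overview}.

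\textbf{Reduction step.} Suppose some axis of length $n \ge 13$ admits a reduction (S), (K) or (I), so that lines $d, d+1$ are free and may be deleted. Let $I'$ be the instance with these lines removed. Its sides are at least $11$, so $I' \in \Dom$. We then check that $I'$ passes whenever $I$ does, entry by entry.
\begin{itemize}
\item For \ent{P}: the two deleted lines together contain equally many black and white cells. Every endpoint beyond the deleted lines keeps its shade, since its coordinate shifts by $2$. Moreover $RC \bmod 2$ is unchanged, because $n-2 \equiv n$.
\item For \ent{T1}, \ent{E1} and \ent{R}: the cyclic order of the endpoints along the border is preserved, and so are the border adjacencies used by \ent{E1}. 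An edge gap keeps at least $2$ lines after a strip, so no endpoint newly reaches the border.
\item For the corner entries (\ent{L1}--\ent{L6}, \ent{B}, \ent{C}, \ent{D}, and rules R1--R3): the thresholds are chosen so that the $8\times8$ corner windows and their content are unchanged. A strip leaves at least two free lines. A compression of a gap of $\ge 10$ lines leaves $\ge 4$ lines on each side of the deleted pair. An interior compression satisfies $8 \le d \le n-10$, so no window moves. The only delicate case is \ent{B}, where the fourth endpoint must stay outside the $6\times 6$ window; the same thresholds should guarantee this.
\end{itemize}
By induction $I'$ is solvable, and Lemma~\ref{lem:band} lifts a solution of $I'$ to $I$.

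\textbf{Irreducible case.} Suppose no reduction applies. If one side were $\ge 23$, a pigeonhole count of the lines containing endpoints would yield a reduction. At most four lines on that axis contain endpoints, so either an edge gap has $\ge 4$ lines or some internal gap is long enough to contain an admissible $d$. Hence both sides lie in $[11,22]$, and there are finitely many irreducible instances. For each passing one we exhibit a valid move, as found by \textsc{FindMove}, and check it by exhaustive computation (\nd in Lean). The pieces of a move are either:
\begin{itemize}
\item one-path pieces, solved by IPS since they are acceptable;
\item thin two-path pieces, solvable by the verified plug DP;
\item two-path pieces in $\Dom$ that pass the catalogue and are strictly smaller, hence solvable by the induction hypothesis.
\end{itemize}
Gluing the piece solutions along the crossing pairs gives a solution of $I$. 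The gluing is valid because each crossing pair consists of adjacent cells, and each virtual endpoint has degree one in its piece's solution.

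The main obstacle is the finite check. It is conceptually trivial but huge, involving all endpoint placements on rectangles up to $22\times22$. It must be cut down using the symmetries of the catalogue and by restricting to irreducible instances, and the search must favour thin pieces so that recursion into $\Dom$ stays rare. A secondary difficulty is the rigorous invariance of \ent{R} under deletion: forced routes must stay inside the windows, and the remaining region's border order must be preserved.
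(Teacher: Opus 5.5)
Your proposal is correct and follows the paper's proof essentially step for step. The shared steps are the induction (on $R+C$ rather than area, which does not matter since reductions and moves both shrink it), the invariance of \pass under the three reductions with band extension lifting solutions, the fact that a side $\ge 23$ always admits a reduction, and a computer-checked valid move whose pieces are settled by IPS, by thin solvability, or by induction.

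One justification needs repair. A strip does not leave the corner windows at the stripped edge unchanged; it shifts their contents by two lines. The actual reason no corner entry or rule R1--R3 newly fires at those corners is that every listed corner pattern has an endpoint with frame coordinate $i \le 1$ and one with $j \le 1$, and the first two lines along that edge are free in $I'$. At the opposite corners the windows are indeed unchanged.
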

The proof is by induction on the area. If a reduction applies, the smaller instance passes because
each entry's view is unchanged, so it is solvable by induction, and band extension lifts its
solution. Otherwise both sides are at most $22$, since a side $\ge 23$ always has a reduction. For
these finitely many instances a computer check shows that a valid move exists. Its thin pieces are
solvable, its other two-path pieces are smaller passing instances in $\Dom$ and solvable by
induction, and the pieces glue.

\section{Implementations}\label{sec:impl}

We provide libraries in Python (the reference), JavaScript and C, each with a command-line program.
All three implement Algorithms~\ref{alg:decide}--\ref{alg:band} and the plug DP with the same
iteration orders, and produce byte-identical output, including the paths. They were tested as
follows.
\begin{itemize}
\item The catalogue of each implementation agrees with the Lean definition on $60{,}000$ random
  instances with sides $2$--$24$ and endpoints clustered near corners.
\item The three implementations agree on $450$ random instances with sides $3$--$45$, and every
  reported solution is verified.
\item The C implementation solves $20{,}000$ random instances with sides $11$--$60$ in about
  $8$ seconds, with \textsc{Decide} and \textsc{Solve} agreeing on every instance.
\item On $300$ random instances with sides $11$--$12$, \textsc{Decide} agrees with an independent
  exact solver.
\end{itemize}
Each library also offers an optimized solver for thin instances. It cuts long thin rectangles into
pieces at guessed splits and falls back to the plug DP when a guess fails. It is exact, and on
random $10$-wide instances it is typically three orders of magnitude faster.

\section{Verification in Lean}\label{sec:lean}

The development is about $27{,}000$ lines of Lean~4~\cite{lean4} with Mathlib~\cite{mathlib20}.
The trusted statement is a short file defining instances, paths and solutions. The main results
are the following.
\begin{itemize}
\item \texttt{passes\_iff\_solvable} states that for $R, C \ge 11$ and a well-formed instance,
  $\pass(I) \leftrightarrow \mathrm{Solvable}(I)$.
\item \texttt{thinBF\_iff} states that the proof-oriented plug DP accepts if and only if the
  instance is solvable, for every rectangle. It uses only the standard axioms.
\item \texttt{solvableB\_iff} states that a decision procedure for every rectangle is correct.
\end{itemize}
Here \pass is a computable Boolean function. The theorem is about it as defined, so no informal
transcription of the catalogue needs to be trusted. The IPS characterization is formalized as well
and used for one-path pieces.

\paragraph{Compromises.}
\begin{enumerate}
\item \emph{Computation by \nd.} The window certificates ($10$ theorems) and the base case
  ($144$ theorems, one per box of sides $11$--$22$) are evaluated by the Lean compiler rather than
  the kernel. This adds $154$ axioms that trust the compiler and the generated native code. Kernel
  evaluation is not practical. On a $6\times6$ window the kernel exceeded $16$\,GB after
  $20$~minutes on a computation the compiled code does in $3$~seconds, and an $8\times8$ window
  reaches $2.3\times10^5$ states.
\item \emph{Symmetry handling.} The base case checks one representative per orbit of the
  $64$-element symmetry group (geometric symmetries, colour swap, reversing either path). Instead
  of proving that the catalogue is invariant under reflections, the computation checks that the
  representative or one of its reflections passes. Invariance under the other symmetries is
  proved. The search region is about $2.8\times10^{10}$ instances, and the check takes about
  $16$ CPU-hours ($4$ hours on $4$ threads).
\item \emph{Checked search instead of verified search.} The move search is a Boolean function
  proved sound, so a move it finds is valid. Thin pieces are solved by an unverified solver, and the
  solutions are checked by a verified checker.
\item \emph{Precompiled modules.} Without \texttt{precompileModules}, \nd interprets the
  project's functions and is about $400\times$ slower.
\item \emph{Running time.} The polynomial bound is not formalized, since Lean has no standard cost
  model short of Turing machines. The theorems certify what is computed, and the running-time
  analysis is on paper.
\item \emph{Thin case.} The verified plug DP uses a proof-friendly state encoding, with sorted lists
  of pieces coded by their ends, rather than the bracket encoding of the implementations.
\end{enumerate}

\section{Discussion}

The characterization gives a linear-time decision procedure for two pairs on rectangles with both
sides at least $11$, while the problem is NP-complete for an unbounded number of pairs. Open
questions include three or more pairs on rectangles, where corner interactions multiply, and two
pairs on other regions such as solid grid graphs. The corner configurations of the catalogue
(\ent{B}, \ent{C}, \ent{D}) were found by computer. A conceptual description of them would shorten
the proof and might allow kernel-checked certificates.

\bibliographystyle{plain}
\bibliography{references}

@article{ips82,
  author  = {Alon Itai and Christos H. Papadimitriou and Jayme Luiz Szwarcfiter},
  title   = {Hamilton Paths in Grid Graphs},
  journal = {SIAM Journal on Computing},
  volume  = {11},
  number  = {4},
  pages   = {676--686},
  year    = {1982},
  doi     = {10.1137/0211056}
}

@article{adcock15,
  author  = {Aaron B. Adcock and Erik D. Demaine and Martin L. Demaine and Michael P. O'Brien and
             Felix Reidl and Fernando S{\'a}nchez Villaamil and Blair D. Sullivan},
  title   = {Zig-Zag {Numberlink} is {NP}-Complete},
  journal = {Journal of Information Processing},
  volume  = {23},
  number  = {3},
  pages   = {239--245},
  year    = {2015},
  note    = {arXiv:1410.5845}
}

@article{rs95,
  author  = {Neil Robertson and Paul D. Seymour},
  title   = {Graph Minors. {XIII}. The Disjoint Paths Problem},
  journal = {Journal of Combinatorial Theory, Series B},
  volume  = {63},
  number  = {1},
  pages   = {65--110},
  year    = {1995},
  doi     = {10.1006/jctb.1995.1006}
}

@article{lynch75,
  author  = {James F. Lynch},
  title   = {The Equivalence of Theorem Proving and the Interconnection Problem},
  journal = {ACM SIGDA Newsletter},
  volume  = {5},
  number  = {3},
  pages   = {31--36},
  year    = {1975},
  doi     = {10.1145/1061425.1061430}
}

@inproceedings{lean4,
  author    = {Leonardo de Moura and Sebastian Ullrich},
  title     = {The {Lean} 4 Theorem Prover and Programming Language},
  booktitle = {Automated Deduction -- CADE 28},
  series    = {Lecture Notes in Computer Science},
  volume    = {12699},
  pages     = {625--635},
  year      = {2021},
  doi     = {10.1007/978-3-030-79876-5_37}
}

@inproceedings{mathlib20,
  author    = {{The mathlib Community}},
  title     = {The {Lean} Mathematical Library},
  booktitle = {Proceedings of the 9th ACM SIGPLAN International Conference on Certified Programs
               and Proofs (CPP)},
  pages     = {367--381},
  year      = {2020},
  doi     = {10.1145/3372885.3373824}
}

@misc{chen08,
  author       = {Danqi Chen},
  title        = {Dynamic Programming Problems Based on Connectivity State Compression},
  howpublished = {Paper of the Chinese National Olympiad in Informatics (NOI) national training team},
  note         = {In Chinese},
  year         = {2008}
}

@misc{oiwiki,
  author       = {{OI Wiki contributors}},
  title        = {Plug {DP}},
  howpublished = {\url{https://oi-wiki.org/dp/plug/}},
  note         = {In Chinese; accessed October 2026},
  year         = {2026}
}

@article{jacobsen07,
  author  = {Jesper Lykke Jacobsen},
  title   = {Exact Enumeration of {Hamiltonian} Circuits, Walks and Chains in Two and Three Dimensions},
  journal = {Journal of Physics A: Mathematical and Theoretical},
  volume  = {40},
  number  = {49},
  pages   = {14667--14678},
  year    = {2007},
  doi     = {10.1088/1751-8113/40/49/003}
}

@article{kk24,
  author  = {Fatemeh Keshavarz-Kohjerdi},
  title   = {Paired Many-to-Many 2-Disjoint Path Covers in Meshes},
  journal = {IEEE Transactions on Parallel and Distributed Systems},
  volume  = {35},
  number  = {10},
  pages   = {1854--1866},
  year    = {2024},
  doi     = {10.1109/TPDS.2024.3445283}
}

@article{parkihm15,
  author  = {Jung-Heum Park and Insung Ihm},
  title   = {Many-to-Many Two-Disjoint Path Covers in Cylindrical and Toroidal Grids},
  journal = {Discrete Applied Mathematics},
  volume  = {185},
  pages   = {168--191},
  year    = {2015}
}

\appendix

\section{The plug DP}\label{app:plugdp}

Transpose so that the grid has $W \le L$ rows. Cells are processed column by column, top to bottom.
The \emph{frontier} after some cells consists of $W + 1$ edges between processed and unprocessed
cells, one per row plus the edge below the current cell (Figure~\ref{fig:frontier}). A state
assigns each frontier edge a 3-bit code. The edge is unused, or it is an \emph{anchor} of colour $0$
or $1$, which is a piece ending at an endpoint, or it is an \emph{open} or \emph{close} bracket of
colour $0$ or $1$. A bracket pair is a piece with both ends on the frontier, and the pairs match like
parentheses because pieces cannot cross. This is the bracket encoding of Chen~\cite{chen08}, with
each plug labelled by its colour.

\begin{figure}[H]
\centering
\includegraphics[height=3.2cm]{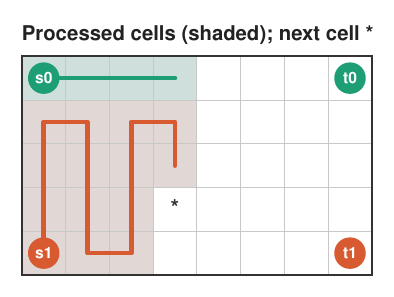}
\caption{Processed cells (shaded) and the next cell ($*$).}
\label{fig:frontier}
\end{figure}

\begin{algorithm}[H]
\caption{Plug DP (decision)}\label{alg:plugdp}
\begin{algorithmic}[1]
\Function{PlugDPFeasible}{$I$} \Comment{$W \le L$ after transposition}
  \State $S \gets \{\text{all-empty}\}$
  \For{column $c = 0, \dots, L-1$}
    \If{$c > 0$} shift every state by one slot \EndIf
    \For{row $r = 0, \dots, W-1$}
      \State $S \gets \bigcup_{s \in S} \Call{Step}{s, r, c}$
    \EndFor
  \EndFor
  \State \Return all-empty $\in S$
\EndFunction
\Statex
\Function{Step}{$s, r, c$} \Comment{slot $r$ is the plug from above, $r{+}1$ the plug from the left}
  \State $b \gets 1$ if $(r, c)$ is an endpoint of colour $\gamma$, else $2$
  \State $h \gets$ the number of incoming plugs
  \If{$h > b$ or $b - h$ exceeds the free directions (right, down)} \Return $\emptyset$ \EndIf
  \If{$h = 0$}
    \State at an endpoint, start a new anchor of colour $\gamma$, going right or down
    \State otherwise start a new bracket pair of either colour, going right and down
  \ElsIf{$h = 1$}
    \State continue the plug right or down, or end it if $(r, c)$ is an endpoint of its colour
    \State (ending an anchor completes its path, and ending a bracket makes its partner an anchor)
  \Else \Comment{join two plugs of the same colour}
    \State two anchors complete the path, and an anchor and a bracket make the partner an anchor
    \State a close then an open merge the pieces, and an open then a close form a cycle (rejected)
    \State two opens or two closes change the bracket type of the outer partner
  \EndIf
  \State \Return the resulting states, with outgoing plugs in slot $r$ (right) and $r{+}1$ (down)
\EndFunction
\end{algorithmic}
\end{algorithm}

\paragraph{Construction.} The forward pass stores one checkpoint, the state set, per column. A
backward pass re-runs each column from its checkpoint, recording predecessors and keeping only
states whose processed slots agree with the known end-of-column state. It then traces back the
chosen edges. The number of states is bounded by the number of valid frontiers (about $10^5$ for
$W = 10$), so the running time is $O(L \cdot W \cdot \#\text{states})$.

\paragraph{Correctness.} By induction over the cells, the reachable states are exactly the
signatures of partial solutions on the processed cells. A signature records the used frontier
edges, their pairing, and the colours of pieces ending at endpoints. Hence the empty final state is
reachable if and only if a solution exists. Lean verifies this for an equivalent DP with a different encoding.

\section{The catalogue test}\label{app:fires}

Algorithm~\ref{alg:fires} tests the entries of Table~\ref{tab:catalogue} in the order used by the
implementations and the Lean definition, and returns the first one that fires.

\begin{algorithm}[H]
\caption{The catalogue test}\label{alg:fires}
\begin{algorithmic}[1]
\Function{Fires}{$I$} \Comment{the first entry that fires, or \textsc{None}}
  \If{$\sum \sigma(\text{endpoint}) \ne 2 \cdot (RC \bmod 2)$} \Return \ent{P} \EndIf
  \If{all endpoints are on the border and their colours alternate around it} \Return \ent{T1} \EndIf
  \If{the endpoints fill a $2\times2$ square, each colour on a diagonal} \Return \ent{T2} \EndIf
  \If{an endpoint's neighbours are all endpoints of the other colour} \Return \ent{L1} \EndIf
  \If{two corners are closed off by endpoints, one corner by each colour} \Return \ent{L6} \EndIf
  \For{each of the $8$ corner frames $f$}
    \If{a pattern \ent{L2}--\ent{L5} occurs at the corner, or \ent{E1} along the frame's first edge}
      \State \Return that entry
    \EndIf
    \If{a listed \ent{B} triple occurs in $f$ and the fourth endpoint is on the border outside the
        $6\times6$ window} \Return \ent{B} \EndIf
    \If{$R, C \ge 10$ and the endpoints form a listed \ent{C} ($RC$ even) or \ent{D} ($RC$ odd)
        configuration in $f$} \Return it \EndIf
  \EndFor
  \State apply rules R1--R3 at each corner frame (R3, R2, R1, without overlaps), giving
         effective endpoints $E'$ and settled cells $S$
  \If{a rule applied and $E'$ are four distinct cells on the boundary of the grid minus $S$, with
      colours alternating along it} \Return \ent{R} \EndIf
  \State \Return \textsc{None}
\EndFunction
\end{algorithmic}
\end{algorithm}

\section{The listed corner configurations}\label{app:lists}

Cells are written $ij$ for frame coordinates $(i, j)$.

\noindent\ent{\textbf{B}}, for $RC$ even. Each triple $(p\,q\,r)$ has colour $\gamma$ at $p$ and $q$
and colour $\bar\gamma$ at $r$.\\
(00 11 12), (01 02 11), (01 02 12), (01 03 11), (01 04 11), (01 04 12), (01 05 11), (01 12 11),
(01 12 13), (01 12 22), (01 12 31), (01 13 03), (01 21 13), (01 21 22), (01 21 31), (02 21 10),
(03 11 10), (04 11 10), (04 21 10), (04 21 12), (05 11 10).

\noindent\ent{\textbf{B}}, for $R$ and $C$ odd.\\
(00 11 12), (00 11 22), (01 02 11), (01 04 11), (04 11 10).

\noindent\ent{\textbf{C}}, for $RC$ even. Each entry is $\{$one pair $\mid$ the other pair$\}$.\\
\{01 03 $\mid$ 11 13\}, \{01 06 $\mid$ 07 11\}, \{01 06 $\mid$ 11 70\}, \{01 06 $\mid$ 12 60\},
\{01 07 $\mid$ 11 60\}, \{01 13 $\mid$ 11 12\}, \{01 13 $\mid$ 12 22\}, \{01 13 $\mid$ 12 31\},
\{01 22 $\mid$ 11 12\}, \{01 22 $\mid$ 12 31\}, \{01 22 $\mid$ 13 30\}, \{01 70 $\mid$ 11 60\},
\{02 11 $\mid$ 21 30\}, \{02 20 $\mid$ 12 21\}, \{03 21 $\mid$ 31 40\}, \{06 21 $\mid$ 12 60\},
\{12 22 $\mid$ 13 21\}, \{12 31 $\mid$ 13 21\}, \{01 12 $\mid$ 04 22\}, \{01 12 $\mid$ 04 31\},
\{01 12 $\mid$ 13 20\}, \{01 12 $\mid$ 13 40\}, \{01 12 $\mid$ 20 22\}, \{01 12 $\mid$ 22 40\},
\{01 12 $\mid$ 31 40\}, \{01 13 $\mid$ 03 04\}, \{01 13 $\mid$ 03 20\}, \{01 13 $\mid$ 03 40\},
\{01 21 $\mid$ 02 22\}, \{01 21 $\mid$ 02 31\}, \{01 21 $\mid$ 04 13\}, \{01 21 $\mid$ 04 22\},
\{01 21 $\mid$ 04 31\}, \{01 21 $\mid$ 13 40\}, \{01 21 $\mid$ 22 40\}, \{01 21 $\mid$ 31 40\}.

\noindent\ent{\textbf{D}}, for $RC$ odd.\\
\{00 01 $\mid$ 11 20\}, \{00 14 $\mid$ 02 13\}, \{00 23 $\mid$ 02 13\}.

\end{document}